\title{\textbf{A decoding algorithm for binary linear codes using Groebner bases}}
\author{Harinaivo ANDRIATAHINY$^{(1)}$\\e-mail : hariandriatahiny@gmail.com\\Jean Jacques Ferdinand RANDRIAMIARAMPANAHY$^{(2)}$\\e-mail : randriamiferdinand@gmail.com\\Toussaint Joseph RABEHERIMANANA$^{(3)}$\\e-mail : rabeherimanana.toussaint@yahoo.fr\\\\$^{1,2,3}$Mention : Mathematics and Computer Science,\\Domain : Sciences and Technologies,\\University of Antananarivo, Madagascar}
\swapnumbers\theoremstyle{plain}
\theoremstyle{plain}
\newtheorem{thm}{Theorem}[section]
\newtheorem{prop}[thm]{Proposition}
\newtheorem{cor}[thm]{Corollary}
\DeclareMathOperator{\lcm}{lcm}
\DeclareMathOperator{\lt}{\mathrm{lt}}
\DeclareMathOperator{\lc}{\mathrm{lc}}
\DeclareMathOperator{\lm}{\mathrm{lm}}
\theoremstyle{definition}
\newtheorem{defi}[thm]{Definition}
\newtheorem{rem}[thm]{Remark}
\DeclareMathOperator{\card}{card}
\DeclareMathOperator{\supp }{supp}
\begin{document}

\maketitle

\begin{abstract}
It has been discovered that linear codes may be described by binomial ideals. This makes it possible to study linear codes by commutative algebra and algebraic geometry methods. In this paper, we give a decoding algorithm for binary linear codes by utilizing the Groebner bases of the associated ideals.
\end{abstract}
Keywords : Linear code, Groebner basis, decoding.\\
MSC 2010 : 13P10, 94B05, 94B35

\section{Introduction}
Coding theory is important for data transmission through noisy communication channels. During the transmission, errors may occur. Linear codes form an important class of error correcting codes.\\
Bruno Buchberger introduced the theory of Groebner bases for polynomial ideals in $1965$. The Groebner bases theory can be used to solve some problems concerning the ideals by developing computations in multivariate polynomial rings.\\
Connection between linear codes and ideals in polynomial rings was presented in \cite{quintana}. It was proved that a Groebner basis of the ideal associated to a binary linear code can be used for determining the minimum distance . In \cite{GBlinearcode}, it has been proved that a linear code can be described by a binomial ideal, and a Groebner basis with respect to a lexicographic order for the binomial ideal is determined .\\
The aim of this paper is to give full decoding algorithm for binary linear codes via Groebner bases, which completes the decoders presented in \cite{GBlinearcode,quintana}.

\section{Groebner bases}
In this section, we recall some definitions and basic properties about Groebner basis (see\cite{clo}) which are useful to our results.\\
Let $k$ be an arbitrary field. $\mathbb{N}$ denotes the set of non negative integers. A monomial in the $m$ variables $X_1,\dots,X_m$ is a product of the form $X_1^{\alpha_1}\dots X_m^{\alpha_m}$, where all the exponents $\alpha_1,\dots,\alpha_m$ are in $\mathbb{N}$. Let $\alpha = (\alpha_1,\dots,\alpha_m)\in\mathbb{N}^m$. We set $X^\alpha = X_1^{\alpha_1}\dots X_m^{\alpha_m}$. When $\alpha = (0,\dots,0)$, note that $X^\alpha = 1$. We also let $\mid\alpha \mid = \alpha_1+\dots+\alpha_m$ denote the total degree of the monomial $X^\alpha$. We define the sum $\alpha + \beta =(\alpha_1+\beta_1,\dots,\alpha_m+\beta_m)\in \mathbb{N}^m$ with $\beta = (\beta_1,\dots,\beta_m)\in\mathbb{N}^m$.\\
A polynomial $f$ in $X_1,\dots,X_m$ with coefficients in $k$ is a finite linear combination with coefficients in $k$ of monomials. A polynomial $f$ will be written in the form $f = \sum_{\alpha}^{}a_\alpha X^\alpha$, $a_\alpha\in k$, where the sum is over a finite number of m-tuples $\alpha = (\alpha_1,\dots,\alpha_m)$.\\
$k[X_1,\dots,X_m]$ denotes the ring of all polynomials in $X_1,\dots,X_m$ with coefficients in $k$. A monomial order on $k[X_1,\dots,X_m]$ is any relation $>$ on $\mathbb{N}^m$, or equivalently, any relation on the set of monomials $X^\alpha$, $\alpha\in\mathbb{N}^m$, satisfying : 
\begin{itemize}
\item[(i)] $>$ is a total ordering on $\mathbb{N}^m$,
\item[(ii)] if $\alpha>\beta$ and $\gamma\in\mathbb{N}^m$, then $\alpha+\gamma>\beta+\gamma$,
\item[(iii)] $>$ is a well-ordering on $\mathbb{N}^m$.
\end{itemize}
A first example is the lexicographic order. Let $\alpha = (\alpha_1,\dots,\alpha_m)$ and $\beta = (\beta_1,\dots,\beta_m)\in\mathbb{N}^m$. We say $\alpha>_{lex}\beta$ if, in the vector difference $\alpha - \beta\in\mathbb{Z}^m$, the left-most nonzero entry is positive. We will write $X^\alpha>_{lex} X^\beta$ if $\alpha>_{lex}\beta$.\\
A second example is the graded lexicographic order. Let $\alpha, \beta\in \mathbb{N}^m$, we say $\alpha>_{grlex}\beta$ if $\mid\alpha\mid\; >\;\mid\beta\mid$, or $\mid\alpha\mid\;=\;\mid\beta\mid$ and $\alpha>_{lex}\beta$.\\
Let $f=\sum_\alpha a_\alpha X^\alpha$ be a nonzero polynomial in $k[X_1,\dots,X_m]$ and let $>$ be a monomial order. The multidegree of $f$ is $multideg(f)= \max(\alpha\in\mathbb{N}^m/ a_\alpha\neq 0)$, the maximum is taken with respect to $>$. The leading coefficient of $f$ is $\lc(f)=a_{multideg(f)}\in k$. The leading monomial of $f$ is $\lm(f)=X^{multideg(f)}$.  The leading term of $f$ is $\lt(f)=\lc(f).\lm(f)$.
\begin{thm}
Fix a monomial order on $\mathbb{N}^m$, and let $F=(f_1,\dots,f_s)$ be an ordered s-tuple of polynomials in $k[X_1,\dots,X_m]$. Then every $f\in k[X_1,\dots,X_m]$ can be written as $f=a_1f_1+\dots+a_sf_s+r$, where $a_i, r\in k[X_1,\dots,X_m]$, and either $r=0$ or $r$ is a linear combination, with coefficients in $k$, of monomials, none of which is divisible by any of $\lt(f_1),\dots,\lt(f_s)$. We will call $r$ a remainder of $f$ on division by $F$. Furthermore, if $a_if_i\neq 0$, then we have $multideg(f)\geq multideg(a_if_i)$.
\end{thm}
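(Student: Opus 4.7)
The plan is to prove the theorem constructively by writing down an explicit division algorithm and then verifying that (a) it terminates, (b) it produces the required decomposition, and (c) the resulting remainder and intermediate terms satisfy the stated multidegree conditions.

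First I would introduce auxiliary variables $a_1,\dots,a_s$, $r$, and an intermediate dividend $p$, all initialized so that $a_i = 0$, $r = 0$, and $p = f$. The algorithm proceeds in a loop while $p \neq 0$. At each iteration, one examines $\lt(p)$: if there exists an index $i$ (take the smallest such $i$, for definiteness) with $\lt(f_i)$ dividing $\lt(p)$, one updates $a_i \mathrel{+}= \lt(p)/\lt(f_i)$ and $p \mathrel{-}= (\lt(p)/\lt(f_i)) f_i$, which cancels the leading term of $p$; otherwise, one moves $\lt(p)$ into the remainder by setting $r \mathrel{+}= \lt(p)$ and $p \mathrel{-}= \lt(p)$. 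The key invariant maintained throughout the loop is the identity $f = a_1 f_1 + \dots + a_s f_s + p + r$, which is trivial at initialization and preserved by both branches of the loop because the operations on $p$ are exactly compensated by the operations on $a_i$ or $r$. When the loop terminates with $p = 0$, this invariant gives the desired decomposition.

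Next I would verify the structural condition on $r$: terms are only ever appended to $r$ in the branch where $\lt(p)$ is not divisible by any $\lt(f_i)$, so every monomial appearing in $r$ has this property, which is exactly the requirement in the statement. For the inequality $\multideg(f) \geq \multideg(a_i f_i)$, note that the contribution added to $a_i$ in any division step is the monomial $\lt(p)/\lt(f_i)$, whose leading term times $\lt(f_i)$ equals $\lt(p)$. Hence each summand pushed into $a_i f_i$ has leading monomial equal to some $\lt(p)$ occurring at some iteration. Since the multidegree of $p$ only decreases during the algorithm (starting from $\multideg(f)$), every such leading term is bounded above by $\multideg(f)$ in the monomial order, and therefore so is $\multideg(a_i f_i)$ (provided it is nonzero).

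The main obstacle is termination, and this is where one must invoke property (iii), the well-ordering condition. In both branches of the loop, the multidegree of $p$ strictly decreases: in the division branch because $\lt(p)$ is cancelled and what remains has strictly smaller leading monomial (using property (ii) to see that $(\lt(p)/\lt(f_i))(f_i - \lt(f_i))$ has multidegree strictly less than $\multideg(p)$); in the remainder branch because $\lt(p)$ is explicitly removed from $p$. Thus the sequence of multidegrees of successive values of $p$ is strictly decreasing in $\mathbb{N}^m$ with respect to $>$, and by the well-ordering assumption no such sequence can be infinite, so the loop must halt after finitely many steps with $p = 0$, completing the proof.
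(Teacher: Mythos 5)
Your proof is correct and is essentially the standard division-algorithm argument from Cox--Little--O'Shea, which is precisely the source the paper cites for this theorem (the paper itself states it without proof). The loop invariant $f=a_1f_1+\dots+a_sf_s+p+r$, the strict decrease of the multidegree of $p$ in both branches, and the appeal to the well-ordering property for termination are exactly the expected ingredients, and your bound $multideg(a_if_i)\leq multideg(f)$ via the intermediate leading terms is sound.
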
 
\begin{rem}\label{rem1}
The operation of computing remainders on division by $F=(f_1,\dots, f_s)$ is linear over $k$. That is, if the remainder on division of $g_i$ by $F$ is $r_i$, $i=1, 2$, then, for any $c_1, c_2\in k$, the remainder on division of $c_1g_1+c_2g_2$ is $c_1r_1+c_2r_2$.
\end{rem}

Let $I\subseteq k[X_1,\dots,X_m]$ be an ideal other than $\{0\}$. We denote by $\lt(I)$ the set of leading terms of elements of $I$. Thus
\begin{center}
$\lt(I)=\{cX^\alpha/ \text{there exists}\  f\in I\ \  \text{with}\ \lt(f)=cX^\alpha\}$.
\end{center}
For each subset $S$ of $k[X_1,\dots,X_m]$, the ideal of $k[X_1,\dots,X_m]$ generated by $S$ is denoted by $\langle S\rangle$.
\begin{thm}[Hilbert Basis Theorem]
Every ideal $I\subseteq k[X_1,\dots,X_m]$ has a finite generating set. That is, $I=\langle g_1,\dots, g_t\rangle$ for some polynomials $g_1,\dots, g_t\in I$.
\end{thm}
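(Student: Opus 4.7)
The plan is to follow the Groebner-basis-style proof that exploits the well-ordering property of monomial orders, since the entire machinery of the preceding section is already available. The case $I = \{0\}$ is trivial, so I assume $I \neq \{0\}$ and fix any monomial order on $k[X_1, \dots, X_m]$.

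First I would study the monomial ideal generated by $\lt(I)$, that is, the ideal generated by the (possibly infinite) collection of leading terms of nonzero elements of $I$. The crucial auxiliary fact, Dickson's lemma, asserts that any monomial ideal in $k[X_1, \dots, X_m]$ is in fact generated by a finite subset of those monomials. I would prove Dickson's lemma by induction on the number of variables $m$: the base case $m = 1$ reduces to selecting the monomial of smallest exponent, and the inductive step separates out the last variable and combines the inductive hypothesis in $k[X_1, \dots, X_{m-1}]$ with a finiteness argument on the $X_m$-exponents.

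With Dickson's lemma in hand, I pick nonzero elements $g_1, \dots, g_t \in I$ whose leading terms generate $\langle \lt(I) \rangle$, and claim that $I = \langle g_1, \dots, g_t \rangle$. The inclusion $\supseteq$ is immediate since each $g_i$ lies in $I$. For the reverse inclusion, given $f \in I$, apply the division algorithm of Theorem 2.1 to write $f = a_1 g_1 + \dots + a_t g_t + r$, where no monomial of $r$ is divisible by any $\lt(g_i)$. Since $r = f - \sum a_i g_i$ also lies in $I$, if $r$ were nonzero then $\lt(r)$ would belong to $\langle \lt(I) \rangle = \langle \lt(g_1), \dots, \lt(g_t) \rangle$, and hence would be divisible by some $\lt(g_i)$, contradicting the defining property of the remainder. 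Therefore $r = 0$ and $f \in \langle g_1, \dots, g_t \rangle$.

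The main obstacle is Dickson's lemma itself; the rest of the argument is a clean application of the division algorithm. The induction on $m$ in Dickson's lemma is not entirely routine because one starts with a potentially infinite generating set of monomials and must organize the $X_m$-exponents so that, after combining with the finitely many inductive generators in the smaller ring, only finitely many are actually needed.
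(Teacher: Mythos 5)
Your proof is correct and is essentially the standard argument from the reference the paper cites for this background material (Cox--Little--O'Shea): reduce to Dickson's lemma for the monomial ideal $\langle \lt(I)\rangle$, choose $g_1,\dots,g_t\in I$ whose leading terms generate it, and conclude via the division algorithm that the remainder of any $f\in I$ must vanish. The paper itself states the theorem without proof, so there is nothing further to compare.
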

\begin{defi}
Fix a monomial order. A finite subset $G=\{g_1,\dots,g_t\}$ of an ideal $I\subseteq k[X_1,\dots,X_m]$ is said to be a Groebner basis for $I$ if \[\langle\lt(g_1),\dots,\lt(g_t)\rangle=\langle\lt(I)\rangle.\]
\end{defi}
\begin{prop}
Fix a monomial order. Every ideal $I$ in the polynomial ring $k[X_1,\dots,X_m]$ other than $\{0\}$ has a Groebner basis. Furthermore, any Groebner basis for an ideal $I$ is a basis of $I$.
\end{prop}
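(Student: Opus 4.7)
The plan is to prove the two claims in order: first existence of a Groebner basis for any nonzero ideal $I$, then the fact that such a basis generates $I$ as an ideal.

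For existence, my starting point is the auxiliary ideal $\langle \lt(I)\rangle$ generated by all leading terms of elements of $I$. Applying the Hilbert Basis Theorem to this ideal gives a finite generating set; what I actually need, however, is a finite generating set whose elements are themselves leading terms of polynomials lying in $I$. To bridge this, I would observe that $\langle \lt(I)\rangle$ is generated, as a set of ideal generators, by the collection $\lt(I)$ itself, and then argue that one can extract finitely many $cX^{\alpha_1},\dots,cX^{\alpha_t}$ from $\lt(I)$ whose span already equals $\langle \lt(I)\rangle$ (this is a standard consequence of finite generation: each Hilbert Basis generator expands as an $R$-linear combination of finitely many elements of $\lt(I)$, and collecting these over all generators yields a finite subfamily of $\lt(I)$ doing the job). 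For each of these $cX^{\alpha_i}$ I then pick a corresponding $g_i \in I$ with $\lt(g_i)=cX^{\alpha_i}$. By construction $\langle \lt(g_1),\dots,\lt(g_t)\rangle=\langle \lt(I)\rangle$, so $G=\{g_1,\dots,g_t\}$ is a Groebner basis of $I$.

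For the second claim, let $G=\{g_1,\dots,g_t\}$ be any Groebner basis of $I$. The inclusion $\langle G\rangle\subseteq I$ is immediate since each $g_i\in I$. For the reverse inclusion, take any $f\in I$ and apply the division algorithm from the earlier theorem, obtaining $f=a_1g_1+\cdots+a_tg_t+r$ where no term of $r$ is divisible by any $\lt(g_i)$. Then $r=f-\sum a_ig_i\in I$. If $r\neq 0$, its leading term $\lt(r)$ belongs to $\lt(I)\subseteq\langle\lt(I)\rangle=\langle\lt(g_1),\dots,\lt(g_t)\rangle$.

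The main technical obstacle is the very last step: concluding that membership of $\lt(r)$ in $\langle\lt(g_1),\dots,\lt(g_t)\rangle$ forces $\lt(r)$ to be divisible by some individual $\lt(g_i)$. This requires invoking the standard fact that in a monomial ideal, every monomial of the ideal is divisible by one of its monomial generators; I would either cite this from \cite{clo} or sketch the short argument that expanding $\lt(r)=\sum h_i\lt(g_i)$ and matching the monomial $\lt(r)$ against the expansion forces some $h_i\lt(g_i)$ to contain a term equal to $\lt(r)$, hence $\lt(g_i)\mid \lt(r)$. This contradicts the remainder condition, so $r=0$, giving $f\in\langle G\rangle$ and completing the proof.
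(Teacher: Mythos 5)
Your proof is correct and is essentially the standard argument from \cite{clo}, which is precisely the source the paper points to for this proposition (the paper itself states it without proof). Both halves are sound: the extraction of finitely many elements of $\lt(I)$ via the Hilbert Basis Theorem (the textbook route uses Dickson's lemma for monomial ideals, but your extraction argument is an equally valid substitute), and the division-algorithm argument for $f\in I$ resting on the key fact that a monomial lying in a monomial ideal must be divisible by one of its monomial generators.
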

\begin{prop}\label{prop1}
Let $G=\{g_1,\dots,g_t\}$ be a Groebner basis for an ideal $I\subseteq k[X_1,\dots,X_m]$ and let $f\in k[X_1,\dots,X_m]$. Then there is a unique $r\in k[X_1,\dots,X_m]$ with the following properties :\\
$(i)$\ \ No term of $r$ is divisible by any of $\lt(g_1),\dots, \lt(g_t)$.\\
$(ii)$\ \ There is $g\in I$ such that $f=g+r$.\\
In particular, $r$ is the remainder on division of $f$ by $G$ no matter how the elements of $G$ are listed when using the division algorithm. 
\end{prop}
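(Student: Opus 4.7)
The plan is to prove existence and uniqueness separately, then deduce the ``no matter how $G$ is listed'' clause as a corollary of uniqueness.

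For existence, I would simply invoke the division algorithm (Theorem 2.1 in the excerpt) applied to the ordered tuple $F = (g_1,\dots,g_t)$. This directly produces an expression
\[
f = a_1 g_1 + \dots + a_t g_t + r
\]
in which no term of $r$ is divisible by any $\lt(g_i)$. Setting $g = a_1 g_1 + \dots + a_t g_t$ yields $g \in I$ (since each $g_i \in I$) and $f = g + r$, so both (i) and (ii) hold.

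For uniqueness, suppose $f = g + r = g' + r'$ with $g, g' \in I$ and with $r, r'$ both satisfying condition (i). Then $r - r' = g' - g \in I$. If $r \neq r'$, then $r - r'$ is a nonzero element of $I$, so $\lt(r-r') \in \langle \lt(I) \rangle = \langle \lt(g_1),\dots,\lt(g_t)\rangle$ by the defining property of a Groebner basis; consequently $\lt(r-r')$ is divisible by some $\lt(g_i)$. The crux is then to observe that $\lt(r-r')$ is a monomial appearing (with nonzero coefficient) in either $r$ or $r'$, which contradicts the assumption that no term of $r$ or of $r'$ is divisible by any $\lt(g_i)$. Hence $r = r'$, and then $g = f - r = f - r' = g'$.

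I expect the main subtlety (rather than a true obstacle) to be the step in uniqueness where I claim that $\lt(r-r')$ is itself a term of $r$ or $r'$. This requires the elementary remark that cancellation in $r - r'$ can only reduce the set of monomials that appear, so the leading monomial of $r-r'$ must already be a monomial of $r$ or of $r'$; the leading coefficient (in $k$) is irrelevant to divisibility of the monomial by $\lt(g_i)$. Finally, the last sentence of the proposition follows immediately: any ordering of $G$ produces, via Theorem 2.1, an $r$ satisfying (i) and (ii), and by the uniqueness just proved all such remainders coincide. By Remark \ref{rem1}, this $r$ is in fact $k$-linear in $f$, which is consistent with the uniqueness but not needed for the proof.
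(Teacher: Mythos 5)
Your proof is correct and is exactly the standard argument: existence from the division algorithm applied to any ordering of $G$, and uniqueness from the fact that a nonzero $r-r'\in I$ would have $\lt(r-r')\in\langle\lt(g_1),\dots,\lt(g_t)\rangle$ and hence divisible by some $\lt(g_i)$, contradicting condition $(i)$ since every monomial of $r-r'$ already occurs in $r$ or $r'$. The paper states this proposition without proof as a recalled result from its reference \cite{clo}, and your argument coincides with the one given there; the only step left implicit is the standard lemma that a monomial lies in a monomial ideal if and only if it is divisible by one of the generators, which is routine.
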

\begin{cor}\label{cor3}
Let $G=\{g_1,\dots,g_t\}$ be a Groebner basis for an ideal $I\subseteq k[X_1,\dots,X_m]$ and let $f\in k[X_1,\dots,X_m]$. Then $f\in I$ if and only if the remainder on division of $f$ by $G$ is zero.
\end{cor}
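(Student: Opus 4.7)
The plan is to prove the two directions separately, leveraging Proposition \ref{prop1} as the key tool; the whole argument is short because the heavy lifting (existence and uniqueness of a remainder with no term divisible by any $\lt(g_i)$) has already been done.

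For the easy direction, suppose the remainder on division of $f$ by $G$ is zero. Then by Theorem~2.1 I can write $f=a_1g_1+\cdots+a_tg_t+0$ with $a_i\in k[X_1,\dots,X_m]$. Since each $g_i\in I$ and $I$ is an ideal, the right-hand side lies in $I$, so $f\in I$. No Groebner basis property is needed here; this works for any generating set of $I$.

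For the nontrivial direction, suppose $f\in I$. I would invoke the uniqueness clause of Proposition \ref{prop1}. The trivial decomposition $f=f+0$ exhibits $f$ as a sum of an element of $I$ (namely $f$ itself) and the polynomial $0$, which vacuously has no term divisible by any $\lt(g_i)$. Proposition \ref{prop1} asserts that such a decomposition is unique, and it also asserts that the remainder on division of $f$ by $G$ produces exactly this unique $r$, independently of the ordering of the $g_i$'s. Hence $r=0$.

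The only subtle point, and where I would expect the reader to want care, is to emphasize that uniqueness in Proposition \ref{prop1} is what rules out the possibility of a nonzero remainder arising from some unlucky order of divisors in the division algorithm. Without the Groebner basis hypothesis, the remainder is only guaranteed to be \emph{some} polynomial none of whose terms is divisible by $\lt(g_i)$, and it may depend on the order of the $g_i$; the Groebner basis hypothesis, through Proposition \ref{prop1}, collapses all these possible remainders to the single value $0$ whenever $f\in I$.
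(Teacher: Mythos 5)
Your proof is correct and is exactly the standard argument (the one in Cox--Little--O'Shea, which this paper cites for all of Section 2 and does not reprove): the division algorithm gives the easy direction, and the uniqueness clause of Proposition \ref{prop1} applied to the trivial decomposition $f = f + 0$ gives the converse. Your closing remark about why the Groebner basis hypothesis is essential --- without it the remainder can depend on the ordering of the divisors and need not vanish for $f \in I$ --- is accurate and well placed.
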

We will write $\overline{f}^F$ for the remainder on division of $f$ by the ordered s-tuple $F=(f_1,\dots, f_s)$. If $F$ is a Groebner basis for $\langle f_1,\dots, f_s\rangle$, then we can regard $F$ as a set without any particular order.\\ 
Let $f, g\in k[X_1,\dots,X_m] $ be nonzero polynomials. If $multideg(f)=\alpha= (\alpha_1,\dots,\alpha_m)$ and $multideg(g)=\beta= (\beta_1,\dots,\beta_m)$, then let $\gamma=(\gamma_1,\dots, \gamma_m)$ where $\gamma_i=\max(\alpha_i, \beta_i)$ for each $i$. We call $X^\gamma$ the least common multiple of $\lm(f)$ and $\lm(g)$, written $X^\gamma=\lcm(\lm(f),\lm(g))$. The S-polynomial of $f$ and $g$ is the combination 
\[S(f,g)=\dfrac{X^\gamma}{\lt(f)}.f-\dfrac{X^\gamma}{\lt(g)}.g\]
\begin{thm}\label{thmgrob}
Let $I$ be a polynomial ideal. A basis $G=\{g_1,\dots,g_t\}$ for $I$ is a Groebner basis for $I$ if and only if for all pairs $i\neq j$, the remainder on division of $S(g_i,g_j)$ by $G$ listed in some order is zero.
\end{thm}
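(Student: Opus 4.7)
The plan is to prove both directions of this equivalence, noting that one direction is essentially immediate while the other is the substantive content (Buchberger's criterion).

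For the $(\Rightarrow)$ direction, I would observe that each S-polynomial $S(g_i,g_j)$ is by construction a polynomial combination of $g_i$ and $g_j$, hence lies in $I$. Since $G$ is assumed to be a Groebner basis, Corollary \ref{cor3} gives immediately that $\overline{S(g_i,g_j)}^G = 0$.

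The $(\Leftarrow)$ direction is the heart of the argument. Assume every S-polynomial has remainder zero on division by $G$. To show $G$ is a Groebner basis, I need to verify that for every nonzero $f\in I$, the leading term $\lt(f)$ is divisible by some $\lt(g_i)$. Since $f\in I=\langle g_1,\dots,g_t\rangle$, I can write
\[
f = \sum_{i=1}^t h_i g_i, \qquad h_i\in k[X_1,\dots,X_m].
\]
Among all such representations, I choose one minimizing $\delta := \max\{\,multideg(h_i g_i) : h_i g_i\neq 0\,\}$ with respect to the well-ordering. Clearly $multideg(f)\leq \delta$. If equality holds, then on the right-hand side at least one term $\lt(h_i)\lt(g_i)$ survives cancellation and equals $\lt(f)$ (up to constant), so $\lt(g_i)\mid \lt(f)$ and we are done.

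The main obstacle — and the crux of the proof — is ruling out the strict inequality $multideg(f) < \delta$. In that case, all top-degree contributions of multidegree $\delta$ must cancel in the sum. The plan is to isolate the indices $I_\delta = \{i : multideg(h_i g_i) = \delta\}$, split off their leading terms as $\sum_{i\in I_\delta} \lc(h_i)\, X^{\alpha(i)} g_i$ where $X^{\alpha(i)} = \lm(h_i)$, and telescope this cancelling sum into a linear combination of S-polynomials $S(g_i,g_j)$ times suitable monomials. Concretely, using that the total contribution cancels at degree $\delta$, I would rewrite this piece as $\sum c_{ij}\, X^{\delta - \gamma_{ij}} S(g_i,g_j)$ for appropriate constants $c_{ij}$ and exponents $\gamma_{ij}$ with $X^{\gamma_{ij}} = \lcm(\lm(g_i),\lm(g_j))$. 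By Theorem 2.1 applied to each $S(g_i,g_j)$ together with the hypothesis that $\overline{S(g_i,g_j)}^G = 0$ and Remark \ref{rem1} on linearity of remainders, each such S-polynomial can be written as $\sum_k a^{(ij)}_k g_k$ with $multideg(a^{(ij)}_k g_k) \leq multideg(S(g_i,g_j)) < \gamma_{ij}$, the strict inequality coming from the cancellation built into $S$-polynomials. Substituting these expressions back and regrouping produces a new representation $f = \sum \tilde h_i g_i$ with $\max multideg(\tilde h_i g_i) < \delta$, contradicting minimality of $\delta$. Hence $multideg(f) = \delta$ after all, and the proof is complete.
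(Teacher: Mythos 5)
Your proposal is correct: the forward direction via Corollary \ref{cor3} and the backward direction via a minimal-$\delta$ representation, telescoping the cancelling top-degree part into monomial multiples of S-polynomials, and using the zero remainders with Theorem 2.1 to contradict minimality is precisely the standard proof of Buchberger's criterion. The paper itself gives no proof of this theorem --- it is recalled from \cite{clo} --- and your argument is essentially the one found there, with all the key steps (the strict inequality $multideg(S(g_i,g_j))<\gamma_{ij}$ and the degree bound from the division algorithm) correctly identified.
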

\begin{rem}
Let $I\subseteq k[X_1,\dots,X_m]$ be an ideal, and let $G$ be a Groebner basis of $I$. Then $\overline{f}^G=\overline{g}^G$ if and only if $f-g\in I$.
\end{rem}
A reduced Groebner basis for a polynomial ideal $I$ is a Groebner basis $G$ for $I$ such that :\\
  $(i)$\ \ $\lc(p)=1$ for all $p\in G$\\
  $(ii)$\ \ For all $p\in G$, no monomial of $p$ lies in $\langle \lt(G-\{p\})\rangle$
\begin{prop}
Let $I\neq \{0\}$ be a polynomial ideal. Then, for a given monomial order, $I$  has a unique reduced Groebner basis.
\end{prop}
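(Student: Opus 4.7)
The plan is to prove existence by a constructive reduction procedure starting from any Groebner basis, and uniqueness by comparing two reduced bases term by term.

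For existence, I would start from any Groebner basis $G=\{g_1,\dots,g_t\}$ of $I$ (which exists by the earlier proposition) and produce a reduced one in three steps. First, I would minimize $G$ by discarding any $g_i$ for which $\lt(g_i)$ is divisible by $\lt(g_j)$ for some $j\neq i$; the remaining set is still a Groebner basis because $\langle \lt(g_1),\dots,\lt(g_t)\rangle = \langle \lt(I)\rangle$ is unchanged when such redundant generators are removed. Second, I would rescale each remaining $g_i$ by $1/\lc(g_i)$ so that condition (i) holds. Third, and most delicately, I would replace each $g_i$ by the remainder $\widetilde{g}_i$ of $g_i$ on division by $G\setminus\{g_i\}$. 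Because no $\lt(g_j)$ (with $j\neq i$) divides $\lt(g_i)$ after the minimization step, the leading term of $g_i$ survives the division, so $\lt(\widetilde{g}_i)=\lt(g_i)$; thus the leading-term ideal is preserved, and $\widetilde{g}_i-g_i\in I$ ensures the new set still generates $I$. No monomial of $\widetilde{g}_i$ is divisible by any $\lt(g_j)$ for $j\neq i$, so condition (ii) holds.

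For uniqueness, suppose $G=\{g_1,\dots,g_t\}$ and $G'=\{g'_1,\dots,g'_s\}$ are two reduced Groebner bases for $I$ with respect to the same monomial order. Since both generate $\langle \lt(I)\rangle$ as a monomial ideal, and since in each reduced basis no leading monomial divides another (by condition (ii) applied to the leading term itself), the leading monomials in each basis form the unique minimal monomial generating set of $\langle \lt(I)\rangle$. Hence the multisets $\{\lt(g_1),\dots,\lt(g_t)\}$ and $\{\lt(g'_1),\dots,\lt(g'_s)\}$ coincide, so $t=s$ and, after relabeling, $\lt(g_i)=\lt(g'_i)$ for every $i$. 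Now consider $h_i:=g_i-g'_i\in I$. The leading terms cancel, so every monomial of $h_i$ comes from the lower terms of $g_i$ or $g'_i$; by condition (ii) applied to both reduced bases, none of these monomials is divisible by any element of $\lt(G)=\lt(G')$. By Proposition~\ref{prop1} the remainder of $h_i$ on division by $G$ equals $h_i$ itself, but Corollary~\ref{cor3} forces this remainder to be $0$. Therefore $h_i=0$ and $g_i=g'_i$ for every $i$.

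The step I expect to be the main obstacle is the uniqueness of the set of leading monomials. It is tempting to take for granted that a monomial ideal has a unique minimal monomial generating set, but this requires a small argument: if $M$ is a monomial in a minimal generating set for the monomial ideal $\langle \lt(I)\rangle$, then $M\in\langle \lt(I)\rangle$ forces $M$ to be divisible by some leading monomial from the \emph{other} basis, and symmetrically that monomial must be divisible by some element of the first basis; minimality then forces equality. Once this is in place, the rest of the argument is straightforward bookkeeping using the results already established in the excerpt.
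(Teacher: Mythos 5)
Your proof is correct and follows the standard argument (minimize, normalize, autoreduce for existence; uniqueness of the minimal generating set of $\langle\lt(I)\rangle$ plus the division/remainder argument via Proposition~\ref{prop1} and Corollary~\ref{cor3} for uniqueness), which is exactly the proof in the reference \cite{clo} that the paper cites for this section without reproducing it. The one step to tighten is the minimization: redundant generators must be discarded one at a time, re-testing divisibility after each removal, since simultaneously ``discarding any $g_i$ with $\lt(g_i)$ divisible by some $\lt(g_j)$, $j\neq i$'' can delete two generators whose leading terms divide each other (e.g.\ $\{X_1,\,X_1+X_2,\,X_2\}$) and so destroy the Groebner property; similarly, in the uniqueness part, the non-leading monomials of $g_i$ are not divisible by $\lt(g_i)$ itself because they are strictly smaller in the monomial order, a point worth stating since condition $(ii)$ only excludes divisibility by $\lt(G-\{g_i\})$.
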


\section{Linear codes and binomial ideals}
Let $\mathbb{F}_p$ be the finite field with $p$ elements where $p$ is a prime number. A linear code $\mathcal{C}$ of length $n$ and dimension $k$  over $\mathbb{F}_p$ is the image of a linear (injective) mapping
 \[ \psi:\mathbb{F}_p^k  \longrightarrow\mathbb{F}_p^n\]
where $ k\leq n$. The elements  of $\mathcal{C}$ are called the codewords. Each word $c=(c_1,...,c_n)\in\mathbb{F}_p^n$ may be represented by the monomial $X^c=X_1^{c_1}\ldots X_n^{c_n}$ and is considered as an integral vector in $X^c$. if $c=(0,...,0)$, then $X^c=1$. We define the support of  an element $c=(c_1,...,c_n)\in \mathbb{F}_p^n$ by $\supp(c):= \{i / c_i\neq 0 \}$. The weight of a word  $c=(c_1,...,c_n)\in\mathbb{F}_p^n$ (or $X^c$) is defined by $w(c):= \card(\supp(c))$, i.e. the number of nonzero entries in $c$. The minimum distance of the linear code  $\mathcal{C}$ is $d:= \min\{d(x,y)/ x,y\in\mathcal{C},x\not = y\}$ where $d(x,y):= \card(\{i / x_i\neq y_i \})$ with $x=(x_1,...,x_n)$ and $y=(y_1,...,y_n)$. We have also $d:= \min\{ w(x) \slash x\in\mathcal{C},x\not = 0\}$. A linear code $\mathcal{C}$ of length $n$ and dimension $k$ is called an $[n,k]$-code. Moreover, if the minimum distance is $d$, we say that $\mathcal{C}$ is an $[n,k,d]$-code.\\
 Let $\mathcal{C}$ be an $[n,k]$-code, $e_i=(\zeta_{i1},...,\zeta_{ik})$ where $i=1,...,k$ the canonical basis of  $\mathbb{F}_p^k$ and $\psi(e_i)=(g_{i1},...,g_{in})$ . The generating matrix of $\mathcal{C}$ is the matrix of dimension  $k \times n$ defined by $G=(g_{ij})$ where $g_{ij}\in\mathbb{F}_p$. The linear code  $\mathcal{C}$ is represented as follows $\mathcal{C}=\{xG / \; x\in\mathbb{F}_p^k\}$. We will say that $G$ is in standard form if $G=(I_k\mid M)$ where $I_k$ is the  $k \times k$ identity matrix.\\
 Let $\mathcal{C}$ be an $[n,k]$-code over $\mathbb{F}_p$. Define the ideal associated with $\mathcal{C}$ as (see\cite{quintana,zi2})
\begin{equation}
\label{ideal}
I_{\mathcal{C}} :=\langle X^c-X^{c^\prime} \ \ \mid c-c^\prime\in\mathcal{C} \rangle + \langle X_i^p-1 \mid 1\leq i \leq n\rangle.
\end{equation}
Let $\mathcal{C}$ be an $[n,k]$-code over  $\mathbb{F}_p$ and 
\begin{equation}\label{matricestandard}
G = (g_{ij}) = (I_k\mid M)
\end{equation}
a generating matrix in standard form .
Let $m_i$ be the vector of length $n$ over $\mathbb{F}_p$ defined by
\begin{equation}\label{mi}
m_i = (0,\ldots, 0,p-g_{i,k+1},\ldots, p-g_{i,n})
\end{equation}
for $1\leq i\leq k$. We have $X^{m_i} = X_{k+1}^{p-g_{i,k+1}}\dots X_{n}^{p-g_{i,n}}  = \displaystyle{\prod_{j\in \supp (m_i)}^{}X_j^{p-g_{i,j}}}$. In particular, if $\supp (m_i)=\emptyset$, then $X^{m_i}=1$.
\begin{thm}
Let us take the lexicographic order on  $\mathbb{K}[X_1,\dots,X_n]$ with \\$X_1>X_2>\cdots >X_n$. The code ideal  $I_\mathcal{C}$ has the reduced Groebner basis
\begin{equation}
\label{groebner}
\mathcal{G} = \{ X_i-X^{m_i} / 1\leq i\leq k\}\cup \{ X_i^p-1 / k+1\leq i\leq n\}.
\end{equation}
\end{thm}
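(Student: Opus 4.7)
The plan is to apply the Buchberger criterion (Theorem \ref{thmgrob}) and then verify the reduced property directly. This splits into three tasks: (a) show $\mathcal{G}\subseteq I_\mathcal{C}$; (b) show $\langle\mathcal{G}\rangle=I_\mathcal{C}$; and (c) show that the S-polynomial of every pair of elements of $\mathcal{G}$ reduces to zero on division by $\mathcal{G}$. Under the chosen lex order, the leading terms are $\lt(X_i-X^{m_i})=X_i$ for $i\le k$ (since $X^{m_i}$ involves only $X_{k+1},\dots,X_n$) and $\lt(X_j^p-1)=X_j^p$ for $j>k$; these also drive the verification of the reduced condition.

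For (a), the $X_j^p-1$ with $j>k$ are already among the defining generators of $I_\mathcal{C}$. For $1\le i\le k$, the $i$-th row of $G=(I_k\mid M)$ is a codeword $g_i=(0,\dots,1,\dots,g_{i,k+1},\dots,g_{i,n})$, so $X^{g_i}-1\in I_\mathcal{C}$; multiplying this by $X^{m_i}$ gives $X_iX_{k+1}^p\cdots X_n^p-X^{m_i}$, and substituting $X_j^p\equiv 1$ for each $k<j\le n$ yields $X_i-X^{m_i}\in I_\mathcal{C}$. For the nontrivial direction of (b), the generator $X_i^p-1$ with $i\le k$ follows from the factorization $X_i^p-(X^{m_i})^p=(X_i-X^{m_i})\bigl(X_i^{p-1}+\cdots+(X^{m_i})^{p-1}\bigr)$ combined with the fact that $(X^{m_i})^p-1\in\langle X_{k+1}^p-1,\dots,X_n^p-1\rangle$. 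For a defining binomial $X^c-X^{c^\prime}$ with $c-c^\prime\in\mathcal{C}$, iteratively applying $X_i\equiv X^{m_i}$ ($i\le k$) followed by $X_j^p\equiv 1$ ($j>k$) rewrites $X^c$ as a unique monomial $X^{c^*}$ supported on $\{k+1,\dots,n\}$ with exponents in $\{0,\dots,p-1\}$; a direct check shows that $c^*$ depends only on the coset of $c$ modulo $\mathcal{C}$, so $X^c$ and $X^{c^\prime}$ yield the same $X^{c^*}$ and their difference lies in $\langle\mathcal{G}\rangle$.

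For (c), the three kinds of pairs each reduce easily. When $i\ne j$ in $\{1,\dots,k\}$, $S(X_i-X^{m_i},X_j-X^{m_j})=X_iX^{m_j}-X_jX^{m_i}$ reduces via $X_i\mapsto X^{m_i}$, $X_j\mapsto X^{m_j}$ to $X^{m_i+m_j}-X^{m_i+m_j}=0$. When $i\le k<j\le n$, $S(X_i-X^{m_i},X_j^p-1)=X_i-X_j^pX^{m_i}$, and applying $X_i\mapsto X^{m_i}$ together with one use of $X_j^p\mapsto 1$ (which sends $X_j^pX^{m_i}$ to $X^{m_i}$ whether or not $j\in\supp(m_i)$) gives $0$. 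When $k<i<j\le n$, $S(X_i^p-1,X_j^p-1)=X_i^p-X_j^p$ reduces to $1-1=0$.

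For the reduced property, each leading coefficient is $1$, and the only non-leading monomials appearing in $\mathcal{G}$ are the constant $1$ and the $X^{m_i}=\prod_{j>k}X_j^{p-g_{i,j}}$, all of whose variables satisfy $j>k$ with exponents strictly less than $p$; none is divisible by a leading term $X_l$ ($l\le k$) or $X_l^p$ ($l>k$) of another element of $\mathcal{G}$. The main obstacle is the coset-representative reduction in step (b), which carries the real content of the theorem; given that, the S-polynomial calculations in step (c) are essentially bookkeeping.
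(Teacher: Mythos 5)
Your argument is correct, and it is worth noting that the paper itself gives no proof of this theorem: it simply defers to \cite{GBlinearcode}. Your write-up therefore supplies what the paper omits, and it follows the standard route (the one also taken in the cited reference): verify $\langle\mathcal{G}\rangle=I_\mathcal{C}$ and then apply the Buchberger criterion of Theorem \ref{thmgrob}. All three parts check out. The identification of leading terms is right, since $X^{m_i}$ involves only $X_{k+1},\dots,X_n$, so $X_i>_{lex}X^{m_i}$; the membership $X_i-X^{m_i}\in I_\mathcal{C}$ via $X^{m_i}(X^{g_i}-1)$ and reduction by the $X_j^p-1$ is correct (with the small caveat that only the $X_j^p$ for $j\in\supp(m_i)$ actually appear); and the three S-polynomial computations are as you state. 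The one place where you should spell out the ``direct check'' is the crux of step (b): writing $c^*_j=c_j-\sum_{i\le k}c_ig_{i,j}\bmod p$ for $j>k$ shows that $c^*=c-\sum_{i\le k}c_ig_i$ is the unique element of the coset $c+\mathcal{C}$ supported on the last $n-k$ coordinates, and the uniqueness is exactly where the standard form $G=(I_k\mid M)$ enters (a codeword whose first $k$ coordinates vanish is zero). Without that remark the claim that $c^*$ depends only on the coset is unsupported. Two further trivia: the reduced-basis condition formally requires that \emph{no} monomial of $p$, including $\lm(p)$, lie in $\langle\lt(\mathcal{G}\setminus\{p\})\rangle$, which here is immediate since the leading monomials are distinct variables and pure powers of distinct variables; and in step (b) the replacement of $X_i^{c_i}$ by $(X^{m_i})^{c_i}$ is legitimately a multiple of $X_i-X^{m_i}$ by the same telescoping factorization you already use for $X_i^p-1$.
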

\begin{proof}
A proof can be found in \cite{GBlinearcode}.
\end{proof}

\section{The decoding algorithm}
We now present our main results and the decoding algorithm. In what follows, we consider the case $p=2$ and $\mathcal{G}$ denotes the reduced Groebner basis as in (\ref{groebner}) for a binary linear code $\mathcal{C}$.
\begin{thm}\label{thm1}
Let $\mathcal{C}$ be an $[n,k,d]$-code over $\mathbb{F}_2$ and suppose that $\mathcal{C}$ is $t$-error-correcting where $t$ is the maximal integer such that $2t+1\leq d$. Let $v\in(\mathbb{F}_2)^n$ be a received word which contains at most $t$ errors. Then the word given by $\overline{(X^v-1)}^\mathcal{G}$ contains at most $t$ nonzero entries if and only if $(X^v-1)-\overline{(X^v-1)}^\mathcal{G}$ represents the codeword that is closest to the received word and the nonzero coordinates of the error vector are among the last $n-k$ coordinates of $v$.
\end{thm}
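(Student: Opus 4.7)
The plan is to compute the normal form $\overline{X^v-1}^{\mathcal{G}}$ explicitly as $X^w-1$ for a well-chosen $w\in\mathbb{F}_2^n$, identify $w$ with a candidate error vector, and then invoke the standard uniqueness of nearest-codeword decoding inside the packing radius. Because division by a Groebner basis is $\mathbb{F}_2$-linear (Remark \ref{rem1}) and $1$ is already in normal form, only the monomial $X^v$ needs to be reduced.

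First I split $v=(v',v'')$ with $v'=(v_1,\dots,v_k)$ and track $X^v=\prod_{i\le k}X_i^{v_i}\cdot\prod_{j>k}X_j^{v_j}$ under the rewrite rules $X_i\longmapsto X^{m_i}$ (for $1\le i\le k$) and $X_j^{2}\longmapsto 1$ (for $k<j\le n$). Because $v_i\in\{0,1\}$ and, in characteristic $2$, $p-g_{i,j}\equiv g_{i,j}\pmod{2}$, the reduction collapses to a single monomial $X^{w}$ with $w\in\mathbb{F}_2^n$ supported on the last $n-k$ positions; a direct computation gives $w\equiv v-v'G\pmod{2}$. Setting $c(v'):=v'G\in\mathcal{C}$, one obtains
\[
\overline{X^v-1}^{\mathcal{G}}=X^{w}-1,\qquad (X^v-1)-\overline{(X^v-1)}^{\mathcal{G}}=X^v-X^{w},
\]
and this last polynomial represents the codeword $v-w=c(v')$ in the sense that $v-(v-w)=w$ makes $v-w$ lie in $\mathcal{C}$.

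Given this explicit formula, the equivalence is almost immediate. For the forward direction, if $w$ has weight at most $t$ then $d(v,c(v'))\le t$; since $2t+1\le d$, $c(v')$ is the unique codeword within distance $t$ of $v$, so it coincides with the transmitted codeword (which is also within distance $t$ of $v$ by hypothesis). The true error equals $v-c(v')=w$, which is supported on the last $n-k$ coordinates, giving both halves of the right-hand side. Conversely, if the closest codeword is $v-w=c(v')$, then by the same uniqueness argument it is the transmitted codeword and the error equals $w$; the hypothesis that $v$ has at most $t$ errors then forces the weight of $w$ to be at most $t$. The only nontrivial step is the explicit monomial reduction in the second paragraph; the rest is a direct application of the standard decoding uniqueness argument within the packing radius, so that step is where I expect any real subtlety—specifically the binary simplification $p-g_{i,j}\equiv g_{i,j}\pmod{2}$ which is what makes $w$ land in $\mathbb{F}_2^n$.
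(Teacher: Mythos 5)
Your proof is correct and follows essentially the same route as the paper: identify the normal form $\overline{X^v-1}^{\mathcal{G}}$ with a candidate error vector $w$ supported on the last $n-k$ coordinates satisfying $v-w\in\mathcal{C}$, then apply uniqueness of nearest-codeword decoding within the packing radius $t$. The only difference is that you carry out the reduction $\overline{X^v}^{\mathcal{G}}=X^{w}$ with $w=v-v'G$ explicitly, whereas the paper delegates this key step to the citation \cite{GBlinearcode}; your version is therefore self-contained where the paper's is not.
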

\begin{proof}
Suppose that the word given by $\overline{(X^v-1)}^\mathcal{G}$ contains at most $t$ nonzero components. By \cite{GBlinearcode}, $(X^v-1)-\overline{(X^v-1)}^\mathcal{G}$ gives the codeword that is closest to the received word. And it is clear that $\overline{(X^v-1)}^\mathcal{G}$ does not contain the variables $X_1,\dots,X_k$.\\
The converse is clear because $\overline{(X^v-1)}^\mathcal{G}$ represents the error vector, thus $\omega(\overline{(X^v-1)}^\mathcal{G})\leq t$. 
\end{proof}
\begin{cor}
Let $\mathcal{C}$ be an $[n,k,d]$-code over $\mathbb{F}_2$ and suppose that $\mathcal{C}$ is $t$-error-correcting where $t$ is the maximal integer such that $2t+1\leq d$. Let $v\in(\mathbb{F}_2)^n$ be a received word which contains at most $t$ errors. Then the word given by $\overline{(X^v-1)}^\mathcal{G}$ contains more than $t$ nonzero entries if and only if there is at least one nonzero coordinate of the error vector among the first $k$ coordinates of $v$.
\end{cor}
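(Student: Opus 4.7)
The plan is to derive the corollary as the contrapositive of Theorem \ref{thm1}. Writing $v=c+e$ with $c\in\mathcal{C}$ the transmitted codeword and $e$ the error vector of weight $w(e)\le t$, Theorem \ref{thm1} provides the equivalence: the remainder $\overline{(X^v-1)}^{\mathcal{G}}$ has weight at most $t$ if and only if both (A) the polynomial $(X^v-1)-\overline{(X^v-1)}^{\mathcal{G}}$ represents the closest codeword and (B) $\supp(e)\subseteq\{k+1,\dots,n\}$. The corollary just negates the left-hand side and the support clause of the right-hand side, so the real content is to see that the two right-hand conditions are in fact equivalent to (B) alone.

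The first step I would carry out is to show that, under the standing hypothesis of at most $t$ errors, (B) by itself already forces (A) together with $w\bigl(\overline{(X^v-1)}^{\mathcal{G}}\bigr)\le t$. Since $v-e=c\in\mathcal{C}$, the binomial $X^v-X^e$ is a generator of $I_{\mathcal{C}}$, so by Corollary \ref{cor3}, Proposition \ref{prop1} and Remark \ref{rem1} the remainders satisfy $\overline{(X^v-1)}^{\mathcal{G}}=\overline{(X^e-1)}^{\mathcal{G}}$. If moreover $\supp(e)\subseteq\{k+1,\dots,n\}$, then $X^e$ is already in normal form with respect to $\mathcal{G}$: none of the leading terms $X_i$ with $i\le k$ divides it, and each remaining exponent is $0$ or $1$ so the relations $X_j^2-1$ do not apply either. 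Hence $\overline{(X^v-1)}^{\mathcal{G}}=X^e-1$, whose weight is exactly $w(e)\le t$, and $(X^v-1)-\overline{(X^v-1)}^{\mathcal{G}}=X^v-X^e$ already represents $c$, the closest codeword. Thus (A) and (B) together are equivalent to (B) alone.

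Finally, taking the contrapositive of the resulting equivalence
\[ w\bigl(\overline{(X^v-1)}^{\mathcal{G}}\bigr)\le t \iff \supp(e)\subseteq\{k+1,\dots,n\} \]
yields exactly the statement of the corollary: the weight exceeds $t$ if and only if $\supp(e)\cap\{1,\dots,k\}\neq\emptyset$. I do not expect any real obstacle, since all the substantive work has been done in Theorem \ref{thm1}; the only point requiring care is the observation that the two right-hand clauses of Theorem \ref{thm1} collapse into the single support condition under the hypothesis that $v$ carries at most $t$ errors, which is what allows the clean contrapositive reformulation.
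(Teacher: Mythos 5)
Your proposal is correct and follows exactly the route the paper intends: the corollary is stated without proof as the logical negation of Theorem \ref{thm1}, and you rightly identify the one nontrivial point -- that negating the conjunction on the right-hand side only yields the stated single condition because (B) already implies (A) together with the weight bound -- and you verify it soundly via $\overline{(X^v-1)}^{\mathcal{G}}=\overline{(X^e-1)}^{\mathcal{G}}=X^e-1$ when $\supp(e)\subseteq\{k+1,\dots,n\}$. No gaps; your write-up is in fact more careful than the paper's implicit argument.
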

From the above discussion, we have the following algorithm.
\begin{thm}\label{thm4}
Let $\mathcal{C}$ be an $[n,k,d]$-code over $\mathbb{F}_2$ and let $\mathcal{G}$ be the reduced Groebner basis for $\mathcal{C}$ defined as in $(\ref{groebner})$. Suppose that the code $\mathcal{C}$ is $t$-error-correcting where $t$ is the maximal integer such that $2t+1\leq d$. Let $u=(u_1,\dots,u_k,u_{k+1},\dots,u_n)\in(\mathbb{F}_2)^n$ be a received word which contains at most $t$ errors. Then $u$ can be decoded by the following algorithm:\\
Input: $u$, $\mathcal{G}$\\
Output: a codeword $c$ that is closest to $u$\\
BEGIN\\
- Compute $\overline{X^u-1}^\mathcal{G}$.
\begin{itemize}
\item[-] If $\omega(\overline{X^u-1}^\mathcal{G})\leq t$, then the codeword $c$ is given by $(X^u-1)-\overline{X^u-1}^\mathcal{G}$.
\item[-] If $\omega(\overline{X^u-1}^\mathcal{G})>t$, then determine $v\in E=\{(a_1,\dots,a_k,0,\dots,0)\slash a_i\in\{0,1\},\ \ \sum_{i=1}^ka_i\leq t\}$ such that\\ $\omega(\overline{X^u-1-(X^v-1)}^\mathcal{G})\leq t-\omega(v)$, thus $X^u-1-(X^v-1)-\overline{X^u-1-(X^v-1)}^\mathcal{G}$ gives the codeword $c$.
\end{itemize}
END
\end{thm}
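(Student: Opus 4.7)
The plan is to split along the two branches of the algorithm and, in each case, reduce to an application of Theorem~\ref{thm1}.

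The first branch, $\omega(\overline{X^u-1}^{\mathcal{G}})\le t$, is handled directly by Theorem~\ref{thm1}: the polynomial $(X^u-1)-\overline{X^u-1}^{\mathcal{G}}$ then represents the unique codeword at distance at most $t$ from $u$. So the real work is in the second branch, $\omega(\overline{X^u-1}^{\mathcal{G}})>t$. Write $u=c+e$ with $c\in\mathcal{C}$ and $\omega(e)\le t$, and decompose $e=e'+e''$ according to whether the coordinates lie in $\{1,\dots,k\}$ or in $\{k+1,\dots,n\}$. Then $e'$, embedded as $(e'_1,\dots,e'_k,0,\dots,0)$, lies in $E$, and the preceding corollary forces $e'\ne 0$. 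I would argue that $v:=e'$ meets the weight condition, and that it is the only member of $E$ that does so, whence the polynomial returned by the algorithm represents $c$.

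Existence rests on the identity $X^u\equiv X^v\cdot X^{u+v}\pmod{I_{\mathcal{C}}}$, which is immediate from $X_i^2-1\in I_{\mathcal{C}}$ for all $i$. It yields
\[
X^u-X^v\;\equiv\;X^v\bigl(X^{u+v}-1\bigr)\pmod{I_{\mathcal{C}}},
\]
so, since congruent elements modulo $I_{\mathcal{C}}$ have the same $\mathcal{G}$-remainder, $\overline{X^u-X^v}^{\mathcal{G}}=\overline{X^v(X^{u+v}-1)}^{\mathcal{G}}$. Setting $u':=u+v=c+e''$, this word has at most $\omega(e'')\le t-\omega(v)$ errors, all confined to the last $n-k$ positions, so Theorem~\ref{thm1} applied to $u'$ gives $\omega(\overline{X^{u'}-1}^{\mathcal{G}})\le t-\omega(v)$ and identifies $(X^{u'}-1)-\overline{X^{u'}-1}^{\mathcal{G}}$ with the codeword $c$. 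The monomial $X^v$ reduces modulo $\mathcal{G}$ to a squarefree monomial in $X_{k+1},\dots,X_n$ (namely $X^{\widetilde v}$, with $\widetilde v$ the last $n-k$ coordinates of $vG$ taken mod~$2$), and multiplication by such a monomial in the quotient ring $\mathbb{F}_2[X_1,\dots,X_n]/I_{\mathcal{C}}$ permutes the reduced monomial basis $\{X^{\alpha}:\alpha\in\mathbb{F}_2^{n-k}\}$ via $\alpha\mapsto\widetilde v+\alpha$, hence preserves the number of nonzero coefficients. Therefore $\omega(\overline{X^u-X^v}^{\mathcal{G}})=\omega(\overline{X^{u'}-1}^{\mathcal{G}})\le t-\omega(v)$.

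For uniqueness, suppose an arbitrary $v\in E$ satisfies $\omega(\overline{X^u-X^v}^{\mathcal{G}})\le t-\omega(v)$. Running the previous argument in reverse gives $\omega(\overline{X^{u+v}-1}^{\mathcal{G}})\le t$, and Theorem~\ref{thm1} then produces a codeword $c^{\ast}$ with $u+v=c^{\ast}+\widetilde e$, $\supp(\widetilde e)\subseteq\{k+1,\dots,n\}$ and $\omega(\widetilde e)\le t-\omega(v)$. Consequently $d(u,c^{\ast})\le\omega(v)+\omega(\widetilde e)\le t$, so the $t$-error-correcting hypothesis forces $c^{\ast}=c$; comparing $e=v+\widetilde e$ with $e=e'+e''$ then forces $v=e'$. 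Hence $(X^u-X^v)-\overline{X^u-X^v}^{\mathcal{G}}$ represents $c$, and the algorithm is correct. The step I expect to be most delicate is the weight-preservation claim: verifying that reducing $X^v$ modulo $\mathcal{G}$ really produces a single squarefree monomial in $X_{k+1},\dots,X_n$ and that multiplication by this monomial acts as a permutation of the reduced monomial basis, so that the weights of $\overline{X^u-X^v}^{\mathcal{G}}$ and $\overline{X^{u+v}-1}^{\mathcal{G}}$ coincide.
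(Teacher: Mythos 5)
The paper gives no actual proof of this theorem --- it is introduced only by the phrase ``From the above discussion'' --- so your write-up is considerably more detailed than anything in the source. Your architecture is the right one: split on the two branches, write $u=c+e$ with $e=e'+e''$ split between the first $k$ and last $n-k$ coordinates, use the identity $X^u\equiv X^v\cdot X^{u+v}\pmod{I_{\mathcal{C}}}$ to relate the remainder of $X^u-X^v$ to that of $X^{u+v}-1$, and invoke the $t$-error-correcting property to show that $v=e'$ is the unique admissible choice and that any admissible $v$ yields the closest codeword. The uniqueness step is slightly loose where you ``apply Theorem~\ref{thm1} to $u+v$'': that theorem's hypothesis (the word contains at most $t$ errors) is not verified for $u+v$, and what you actually need is the weaker fact, implicit in the paper's use of \cite{GBlinearcode}, that for \emph{any} word $w$ the remainder of $X^w-1$ is $X^{\widetilde e}-1$ with $\widetilde e$ supported on the last $n-k$ coordinates and $w-\widetilde e\in\mathcal{C}$; with that substitution your $d(u,c^{\ast})\le t$ argument goes through.

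The one step that is wrong as stated is the weight-preservation claim. You identify $\omega$ of a remainder with ``the number of nonzero coefficients'' of the reduced polynomial and argue that the permutation $\alpha\mapsto\widetilde v+\alpha$ preserves that count. But that is not the paper's $\omega$: in the paper's own example $\omega(X_7-1)=1$, although $X_7-1$ has two nonzero coefficients. The weight of a remainder is the weight of the \emph{word it represents}; for a binomial $X^a-X^b$ in reduced form this is $\omega(a+b)$, the XOR of the two exponent vectors, not the number of terms. Under your reading, the test $\omega(\overline{X^u-X^v}^{\mathcal{G}})\le t-\omega(v)$ would fail whenever $t-\omega(v)=1$ and $e''\ne 0$, since the remainder $X^{\widetilde v+e''}-X^{\widetilde v}$ always has exactly two terms, so the algorithm would reject legitimate error patterns. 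The repair is short and uses exactly the mechanism you already set up: the permutation $\alpha\mapsto\widetilde v+\alpha$ sends the pair of exponents $(e'',0)$ of $\overline{X^{u+v}-1}^{\mathcal{G}}=X^{e''}-1$ to $(\widetilde v+e'',\widetilde v)$, and it preserves the XOR of the two exponents, hence the represented word $e''$ and therefore $\omega$. With that correction (and the Theorem~\ref{thm1} citation adjusted as above), your proof is correct and supplies the argument the paper omits.
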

In the case of the linear code which is one error correcting, we have a simple decoding algorithm
\begin{cor}\label{cororesult}
Let $\mathcal{C}$ be a binary linear code of length $n$ and dimension $k$. Suppose that $\mathcal{C}$ is one error correcting. Let $u\in(\mathbb{F}_2)^n$ be a received word which contains at most one error. 
\begin{itemize}
\item[-] If $\omega(\overline{X^u-1}^\mathcal{G})\leq 1$, then $(X^u-1)-\overline{X^u-1}^\mathcal{G}$ gives the codeword that is closest to the received word.
\item[-] If $\omega(\overline{X^u-1}^\mathcal{G})>1$, then there exists an integer $i$ ($1\leq i \leq k$) such that $\overline{X^u-1}^\mathcal{G}=\overline{X^{v_i}-1}^\mathcal{G}$ where $v_i=(0,\dots,0,1,0\dots,0)$, the integer $1$ is the i-th coordinate of $v_i$ and $c= u + v_i$ is the codeword.
\end{itemize}
\end{cor}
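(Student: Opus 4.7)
The plan is to handle the two bullets separately, both by reducing to earlier results in the excerpt.

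For the first bullet, with $t=1$ I would simply invoke Theorem \ref{thm1}: the hypothesis $\omega(\overline{X^u-1}^\mathcal{G}) \leq 1 = t$ is exactly the input to that theorem, so $(X^u-1)-\overline{X^u-1}^\mathcal{G}$ represents the closest codeword.

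For the second bullet I would first locate the error. Since $u$ has at most one error but $\omega(\overline{X^u-1}^\mathcal{G}) > 1$, Corollary~\ref{cor3} together with Theorem~\ref{thm1} (or equivalently the corollary immediately following it) forces at least one nonzero error coordinate to lie among the first $k$ positions; combined with the at-most-one-error bound, the error vector must be exactly $v_i$ for some unique $1 \leq i \leq k$. Thus $u = c + v_i$ for a unique codeword $c$, which immediately gives the claim $c = u + v_i$ (addition in $\mathbb{F}_2$). What remains is to justify the identity $\overline{X^u-1}^\mathcal{G} = \overline{X^{v_i}-1}^\mathcal{G}$.

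For that identity I would show $X^u - X^{v_i} \in I_\mathcal{C}$ and then apply the remark that two polynomials have equal remainders modulo a Groebner basis iff their difference lies in the ideal, combined with the linearity of remainders (Remark~\ref{rem1}) to drop the harmless $-1$ terms. The key computation is the chain $X^u \equiv X^{c+v_i} = X^c \cdot X^{v_i} \equiv 1 \cdot X^{v_i} = X^{v_i} \pmod{I_\mathcal{C}}$: the first congruence uses that in the quotient by $\langle X_j^2 - 1 : 1\leq j\leq n\rangle \subseteq I_\mathcal{C}$ every exponent reduces modulo $2$, so treating $c + v_i$ as an integer vector gives the same class as $u = c+v_i \in \mathbb{F}_2^n$; the second equality is the multiplicativity of monomials; and the last congruence uses $X^c - 1 \in I_\mathcal{C}$ because $c \in \mathcal{C}$.

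The only real obstacle is the bookkeeping between $\mathbb{F}_2^n$-arithmetic on exponent vectors and $\mathbb{Z}_{\geq 0}^n$-arithmetic on monomials: the equation $X^u = X^c X^{v_i}$ holds only modulo the binomials $X_j^2 - 1$, not on the nose. Once this congruence is carefully stated, the rest is immediate from the cited results.
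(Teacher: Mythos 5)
Your proposal is correct and follows exactly the route the paper intends (the paper states this corollary without proof, as a specialization of Theorem~\ref{thm1}, its unnumbered corollary, and Theorem~\ref{thm4} to $t=1$); your handling of the exponent arithmetic modulo the binomials $X_j^2-1$ is the right care to take. Note only that the final identity is even more immediate than your congruence chain suggests: since $u-v_i=c\in\mathcal{C}$, the binomial $X^u-X^{v_i}$ is literally one of the generators of $I_\mathcal{C}$ in~(\ref{ideal}), so Remark~\ref{rem1} and Corollary~\ref{cor3} give $\overline{X^u-1}^{\mathcal{G}}=\overline{X^{v_i}-1}^{\mathcal{G}}$ at once.
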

It is clear that the previous result can be easily generalized to linear codes over $\mathbb{F}_p$.

\section{Examples}
We consider the $[7,4]$-code $\mathcal{C}$ over $\mathbb{F}_2$ where the generator matrix is given by 
\[G=\begin{pmatrix}
1 & 0 & 0 & 0 & 1 & 1 & 1 \\
0 & 1 & 0 & 0 & 0 & 1 & 1 \\
0 & 0 & 1 & 0 & 1 & 0 & 1 \\
0 & 0 & 0 & 1 & 1 & 1 & 0 \\
\end{pmatrix}\]
By considering the lexicographic order on $\mathbb{F}_2[X_1,\dots,X_7]$ with $X_1>X_2>\dots>X_7$, the ideal $I_\mathcal{C}$ (\ref{ideal})  has the Groebner basis $\mathcal{G}$ whose elements are 

$f_1 = X_1-X_5X_6X_7$\hspace{3cm}$f_5 = X_5^2-1$

$f_2 = X_2-X_6X_7$\hspace{3.5cm}$f_6 = X_6^2-1$

$f_3 = X_3-X_5X_7$\hspace{3.5cm}$f_7 = X_7^2-1$

$f_4 = X_4-X_5X_6$\\
Let $u=(1,0,0,1,1,0,0)$ be a received word. We have $X^u=X_1X_4X_5$, by the division of $X^u-1$ by $\mathcal{G}$, we obtain $X^u-1=X_4X_5(f_1)+X_6X_7(f_4)+X_5X_7(f_5)+X_5X_7-1$. Since $\mathcal{C}$ is one error correcting and $\omega(X_5X_7-1)=2>1$, then by Corollary \ref{cororesult} and from the expression of $f_3$, there exists $i=3$ such that $\overline{X^u-1}^\mathcal{G}=\overline{X^{v_3}-1}^\mathcal{G}$ where $v_3=(0,0,1,0,0,0,0)$. Thus the codeword is $c=u+v_3=(1,0,1,1,1,0,0)\in\mathcal{C}$. \\
Let $v=(1,1,0,1,0,1,1)\in(\mathbb{F}_2)^7$ be another received word. Since $X^v-1=X_1X_2X_4X_6X_7-1$ then $X^v-1=X_2X_4X_6X_7(f_1)+X_4X_5(f_2)+X_5X_6X_7(f_4)+X_7-1$. We have $\omega(\overline{X^v-1}^\mathcal{G})=\omega(X_7-1)=1\leq 1$. Then by Corollary \ref{cororesult}, the codeword is $ c= (1,1,0,1,0,1,1)+(0,0,0,0,0,0,1) = (1,1,0,1,0,1,0)\in\mathcal{C}$.

\end{document}